\newcommand{\blu}{}
\newcommand{\beq}{\begin{equation}}
\newcommand{\eeq}{\end{equation}}
\newtheorem{thm}{Theorem}
\newtheorem{corollary}{Corollary}
\newtheorem{definition}{Definition}
\newcommand{\rl}[1]{\left(#1\right)}
\newcommand{\lK}{K}
\newcommand{\lL}{L}
\newcommand{\newc}{\newcommand}
\newc{\kt}{\rangle}
\newc{\br}{\langle}
\newc{\beqa}{\begin{eqnarray}}
\newc{\eeqa}{\end{eqnarray}}
\newc{\ovl}{\overline}
\begin{document}

\title{Absolutely maximally entangled state equivalence and the construction of infinite quantum solutions to the problem of 36 officers of Euler}
\author{Suhail Ahmad Rather \footnote{Present address: Max-Planck-Institut f\"ur Physik komplexer Systeme, N\"othnitzer Stra\ss{}e 38, 01187 Dresden, Germany} }
\thanks{Both authors contributed equally} 
\affiliation{Department of Physics, Indian Institute of Technology
Madras, Chennai~600036, India}
\affiliation{Center for Quantum Information, Communication and  Computation, Indian Institute of Technology Madras, Chennai ~600036, India}
\author{N. Ramadas}
\thanks{Both authors contributed equally}

\affiliation{Department of Physics, Indian Institute of Technology
Madras, Chennai~600036, India}
\affiliation{Center for Quantum Information, Communication and  Computation, Indian Institute of Technology Madras, Chennai ~600036, India}
\author{Vijay Kodiyalam}
\affiliation{The Institute of
	Mathematical Sciences, CIT Campus, Taramani, Chennai~600113, India} 
\affiliation{Homi Bhabha National
	Institute, Anushakti Nagar, Mumbai~400085, India}
\author{Arul
Lakshminarayan}
\affiliation{Department of Physics, Indian Institute of Technology
	Madras, Chennai~600036, India}
\affiliation{Center for Quantum Information, Communication and  Computation, Indian Institute of Technology Madras, Chennai ~600036, India}
   
\begin{abstract}
Ordering and classifying  multipartite quantum states by their entanglement content remains an open problem. One class of highly entangled states, useful in quantum information protocols, the absolutely maximally entangled (AME) ones, are specially hard to compare as all their subsystems are maximally random. 
While, it is well-known that there is no AME state of four qubits, many analytical examples and numerically generated ensembles of 
four qutrit AME states are known. However, we prove the surprising result that there is truly only {\em one} AME state of four qutrits up to local unitary equivalence.  In contrast, for larger local dimensions, the number of local unitary classes of AME states is shown to be infinite. Of special interest is the case of local dimension 6 where it was established recently that a four-party AME state does exist, providing a quantum solution to the classically impossible Euler problem of 36 officers. Based on this, an infinity of quantum solutions are constructed and we prove that these are not equivalent. The methods developed can be usefully generalized to  multipartite states of any number of particles.
\end{abstract}

\maketitle
\section{Introduction}
Quantum entanglement between two distant parties, with its counterintuitive nonclassical features, has been experimentally verified  \cite{Clauser_1972,Aspect_1982,Giustina_loopholefree_2015,Hensen_loophole_2015} via violations of Bell-CHSH inequalities \cite{Clauser_1969}. Multipartite entanglement which is at the heart of quantum information, computation and many-body physics is still poorly understood. Studying the entanglement content in them via inter-convertibility and classifying them are of fundamental importance. A putative maximally entangled class called absolutely maximally entangled (AME) states are such that there is maximum entanglement between any subset of particles and the rest \cite{Helwig_2012}. They have been related to error correcting codes \cite{Sc04}, both classical and quantum, combinatorial designs such as orthogonal Latin squares (OLS) \cite{Clarisse2005,Goyeneche2015,GRMZ_2018}, quantum parallel-teleportation and secret sharing \cite{Helwig_2012}, and holography \cite{Pastawski2015}.
{ It is therefore of considerable interest to find structure among such highly entangled multipartite states, in particular, can some AME states have more nonlocal resource than others?}

Given $N$ particles with $d$ levels each (the local dimension is $d$) there is no guarantee that an AME state, denoted AME$(N,d)$, exists. For example, AME$(4,2)$ does not exist; four qubits cannot be absolutely maximally entangled \cite{Higuchi2000}. It is known that AME$(N,2)$ exists only for $N=2,3,5,$ and $6$  \cite{Bennett_1996_QECC,Sc04, Huber_2017}. A table of known AME$(N,d)$ constructions is maintained \cite{Huber_AME_Table}, and a recent update is presumably AME$(4,6)$ \cite{SRatherAME46}. This long defied construction and provided a quantum solution to the classically impossible problem of ``36-officers of Euler". This state was dubbed the ``golden-AME" state due to the unexpected appearance of the golden ratio in it. Recent works have appeared elucidating the nature of the solution and its geometric implications ~\cite{Zyczkowski_et_al_2022,Zyczkowski_2022}.

Given that any type of entanglement cannot on the average increase under local operations and classical communication (LOCC), two states $|\psi_1\kt$ and $|\psi_2\kt$ are said to be LOCC-equivalent if they can be converted to each other under such operations \cite{Bennet_entang_concen_1996,Bennett_2000}. A finer, but easily defined, equivalence is local unitary (LU) equivalence: \begin{align}
\begin{aligned}
|\psi_1\kt \stackrel{LU}{\sim} |\psi_2\kt
\end{aligned}
\end{align} iff there exists local unitary operators $u_i$, such that $|\psi_2\kt=(u_1\otimes\cdots \otimes u_N) |\psi_1\kt$.
A coarser classification is provided by Stochastic-LOCC wherein conversion occurs with a nonzero probability of success \cite{Duur_2000,Acin_2000}. Mathematically, this replaces the unitary $u_i$ in the LU-equivalence by invertible matrices. For pure AME states such as this work addresses, SLOCC (and hence also LOCC) equivalence is identical to LU-equivalence \cite{Adam_SLOCC_2020}. 


However LU-equivalence among AME states is a long-standing problem that is notoriously hard to resolve \cite{Kraus_2010,Sauerwein_SLOCC_2018,Adam_SLOCC_2020} as all the subsystem states are maximally mixed. There have been several examples of AME$(4,3)$ in the literature, from those based on graph states and combinatorial structures \cite{Clarisse2005,Helwig_2013,Goyeneche2015,Gaeta_2015,ASA_2021} to numerically generated ensembles \cite{SAA2020, rico2020absolutely}.
Despite this, we prove the surprising conjecture \cite{SAA_2022} that there is exactly {\em one} LU-equivalence class of AME$(4,3)$ states.  We show that they are all equivalent to each other and hence equivalent to one with minimal support or rank such as: 
\begin{align}\label{eq:p9_state}
\begin{aligned}
|\Psi_{P_9}\kt=&(|1111\kt+|1222\kt +|1333\kt +|2123\kt +|2231\kt \\&+|2312\kt +|3132 \kt+|3213\kt+|3321\kt)/3. 
\end{aligned}
\end{align}
{ The minimal support or rank here refers to the minimum natural number $r$ such that the corresponding state can be represented as a superposition of $r$ orthonormal product states \cite{bruzda2023rank}. For an AME$(N,d)$ state with $N,d \geq 2$, the minimal support is found to be $d^{\lfloor N/2\rfloor}$ \cite{bernal2017existence}.}

We provide a set of invariants that when they coincide for two states implies their LU-equivalence. 
For $d>3$, but $d\neq 6$, orthogonal Latin squares (OLS) can be used to construct AME$(4,d)$ states \cite{Clarisse2005,Goyeneche2015}. We show that a continuous parametrization based on multiplication of suitable components by  phases gives invariants that can take an uncountable infinity of values and hence lead to an infinity of LU-equivalence classes. For the special case of $d=6$, there are no OLS constructions \cite{bose1960further}, but we use the recently constructed ``golden-state" \cite{SRatherAME46} as a basis for a similar construction which leads to an infinity of LU-equivalence classes in this case as well.  The methods developed can be generalized to larger number of particles and provides a new outlook into highly entangled multipartite states.

A unitary matrix $U$ of order $d^2$ can be used to define a four-party state
\beq 
\label{eq:AME_U}
\ket{\Psi_{U}}=\frac{1}{d}\sum_{i \alpha j \beta}     U^{i \alpha}_{j \beta}  |i \alpha j \beta \kt,
\eeq
where $ U^{i \alpha}_{j \beta} :=\bra{i \alpha}U \ket{j \beta}$.
The state $\ket{\Psi_{U}}$ is a vectorization of the matrix $U$ \cite{Zanardi2001,Zyczkowski2004}. If the unitary $U$ is 2-unitary (defined in next section), then the corresponding state is an AME($4,d$). A unitary operator $U$  is LU equivalent to $U'$ if there exist single-qudit gates $u_i$ and $v_i$ such that 
\begin{align}
\begin{aligned}
U'=(u_1 \otimes u_2) U (v_1 \otimes v_2).
\end{aligned}
\end{align}
The corresponding four-party states are also LU equivalent as $\ket{\Psi_{U'}}=(u_1 \otimes u_2 \otimes v_1^T \otimes v_2^T)\ket{\Psi_{U}}$, where $T$ is the usual transpose. Therefore the equivalence among AME($4,d$) states can be studied via equivalence of 2-unitary operators. In this paper, we construct and use LU invariants that are based on unitary operators rather than directly the coefficients of states. Based on four permutations of $n$ copies, these are easily computed and are in principle complete, in the sense that if all of them are equal then the operators or corresponding states are LU-equivalent \cite{VijayK}.

The fact that there is only one LU class of AME$(4,3)$ states implies that there is only {\it one} 2-unitary matrix of order $9$ denoted $P_9$, up to multiplication by local unitaries on either side, and no genuinely orthogonal quantum  Latin square \cite{MV19,GRMZ_2018} in $d=3$. It should also be noted that while generic states of four parties (even for qubits) have an infinity of LU-equivalence classes, the case of AME states forms an exceptional set.

{

\section{Preliminaries and Definitions}
In this section, we recall necessary background on classical and quantum orthogonal Latin squares, and 2-unitary operators.

\subsection{Orthogonal Latin squares}

A Latin square (LS) of order $ d $ is a $ d\times d $ array filled by numbers $[d]= \lbrace1,2,..,d \rbrace $ each appearing exactly once in each row and column. 
Two Latin squares $ K $ and $ L $ of order $ d $, with entries $ K_{ij} $ and $ L_{ij} $ in $i$-th row and $j$-th column, are called orthogonal Latin squares if the $ d^2 $ pairs $ (K_{ij},L_{ij}) ,~~i,j\in[d]$ occur exactly once.

As mentioned earlier, a pair of orthogonal Latin squares of order $d$ can be used to construct an AME$(4,d)$ state. If the orthogonal Latin squares are $K$ and $L$ with order $d$, the corresponding AME$(4,d)$ state can be constructed as follows
\begin{align}\label{eq:ame_from_ols}
\begin{aligned}
\ket{\psi_{(K,L)}} = \frac{1}{d} \sum_{i,j} \ket{ij} \ket{K_{ij}L_{ij}}.
\end{aligned}
\end{align} 
Such construction exists for all $d$, except $d=2$ and $d=6$, where there are no OLS.

{ The notion of a Latin square can be generalized by replacing discrete symbols with vectors or pure quantum states  \cite{MV16}. A quantum Latin square  of size $d$ is a $d\times d$ arrangement of $d$-dimensional vectors such that each row and column forms an orthonormal basis. The mapping of discrete symbols in a classical Latin square to computational basis vectors; $\left\lbrace i \mapsto \ket{i},j\mapsto \ket{j}:\langle i | j \rangle=\delta_{ij}, i,j=1,2,\cdots,d\right\rbrace$, results in a quantum Latin square of size $d$. For $d=2$ and $3$ all quantum Latin squares are equivalent to classical ones for some appropriate choice of bases \cite{Paczos_2021}. However, for $d\geq 4$ there exist quantum Latin squares that are not equivalent to classical Latin squares \cite{Paczos_2021}. 

Analogous to orthogonality of classical Latin squares there is a notion of orthogonality of quantum Latin squares \cite{GRMZ_2018,MV19}. Two quantum Latin squares $\mathcal{Q}_1$ and $\mathcal{Q}_2$ are said to be orthogonal if together they form an orthonormal basis in $\mathcal{H}_{d^2}$. To be precise, if $\ket{\psi_{ij}}$ and $\ket{\phi_{ij}}$ denote single-qudit states in the $i$-th row and $j$-th column of orthogonal quantum Latin squares $\mathcal{Q}_1$ and $\mathcal{Q}_2$, then the set  $\left\lbrace \ket{\psi_{ij}} \otimes \ket{\phi_{ij}}; i,j=1,2,\cdots,d\right\rbrace$ is an orthonormal basis. Thus orthogonal quantum Latin squares provide a special product basis in $\mathcal{H}_d \otimes \mathcal{H}_d$  in which both single-qudit basis form a quantum Latin square. 

A more general notion of orthogonal quantum Latin squares allows for an entangled basis in $\mathcal{H}_d \otimes \mathcal{H}_d$.  In this case a $d \times d$ array of bipartite pure states $\ket{\Psi_{ij}} \in \mathcal{H}_d^{A} \otimes \mathcal{H}_d^{B}$ form an orthogonal quantum Latin square, if they satisfy the following conditions \cite{SRatherAME46,rico2020absolutely}: 
\begin{equation*}
\begin{split}
\braket{ \Psi_{ij} | \Psi_{kl}} &= \delta_{ij} \delta_{kl}  \, ,\\
\text{Tr}_A \rl{ \sum_{k=1}^d  \ket{\Psi_{ik}} \bra{ \Psi_{jk}}}  = \delta_{ij} &\mathbb{I}_d =
 \text{Tr}_B \rl{ \sum_{k=1}^d  \ket{\Psi_{ik}} \bra{ \Psi_{jk}}} ,\\
\text{Tr}_A \rl{ \sum_{k=1}^d  \ket{\Psi_{ki}} \bra{ \Psi_{kj}}}  = \delta_{ij} &\mathbb{I}_d =
 \text{Tr}_B \rl{ \sum_{k=1}^d  \ket{\Psi_{ki}} \bra{ \Psi_{kj}}}.
\end{split}
\end{equation*}
Here $\Tr_A$ and $\Tr_B$ denotes the partial trace operations onto subsystems $B$ and $A$, respectively. It is easily seen that in the case of an unentangled basis these conditions are equivalent to the first definition. 

The above definitions are equivalent to the bipartite unitary operator $U=\sum_{i,j=1}^d \ket{i}\ket{j}\bra{\Psi_{ij}}$ remaining unitary under particular matrix rearrangements as explained below.  Such unitary operators are called 2-unitary \cite{Goyeneche2015} and form the main focus of this work.}

\subsection{2-unitary operators}
A unitary operator $U$ on $\mathbb{C}^d \otimes \mathbb{C}^d$ $\in \mathcal{U}(d^2)$ can be expanded in a product basis as 

\begin{align}
\begin{aligned}
U = \sum_{i\alpha j \beta} \bra{i \alpha} U  \ket{j \beta} \ket{i\alpha} \bra{j \beta}.
\end{aligned}
\end{align}
We recall the following matrix rearrangement operations familiar from state separability criteria \cite{chen2002matrix,peres1996separability}: 

\begin{itemize}
\item[(i)] Realignment, $R$ :
 \begin{align}
\begin{aligned}
\bra{ij}U^R\ket{\alpha \beta}=\bra{i \alpha}U \ket{j \beta}
\end{aligned}
\end{align}
\item[(ii)] Partial (or blockwise) transpose, $\Gamma$:
\begin{align}
\begin{aligned}
\bra{i\beta}U^{\Gamma}\ket{j\alpha}=\bra{i \alpha}U \ket{j \beta}.
\end{aligned}
\end{align}
\end{itemize}
Here $U^R$ and $U^\Gamma$ denote the matrices obtained after realignment and partial transpose operations, respectively. These operations allows us to define the following classes of unitary operators:
\begin{definition}
(Dual unitary) A matrix $U$ is dual unitary if $U$ and $U^R$ are unitary.
\end{definition}
\begin{definition}
(T-dual unitary) A matrix $U$ is called T-dual unitary if $U$ and $U^\Gamma$ are unitary.
\end{definition}
\begin{definition}
(2-unitary) A matrix $U$ is 2-unitary if it is dual unitary and T-dual unitary.
\end{definition}
Quantum circuit models constructed from dual unitaries have been widely studied recently as models of nonintegrable many-body quantum systems \cite{Bertini2019, claeys2020ergodic, Gopalakrishnan2019}, and the circuits constructed form 2-unitaries have been shown to possess extreme ergodic properties \cite{ASA_2021}. A generalization of 2-unitary matrices to multi-unitary matrices allows the construction of AME states with higher number of parties \cite{Goyeneche2015}.

The 2-unitary matrix corresponding to the state in Eq. (\ref{eq:ame_from_ols}) constructed from OLS gives a 2-unitary permutation defined as follows
\begin{align}
\begin{aligned}
P = \sum_{i,j} \ket{ij} \bra{K_{ij}L_{ij}}.
\end{aligned}
\end{align}
2-unitary permutations can be constructed from OLS in all local dimensions $d>2$, except $d=6$. It is also noted that if we multiply a 2-unitary permutation with a diagonal unitary matrix, it remains 2-unitary. In fact, permutations that are dual/T-dual unitary remain dual/T-dual unitary under the multiplication of all nonvanishing (unit) elements by phases--we refer to this as \textit{enphasing}. Thus, all 2-unitary permutations remain 2-unitary under enphasing.

\section{LU-equivalence of AME$(4,3)$ states}
This is the smallest case where four-party AME states exist. It has been shown that AME states of minimal support or rank 9 are all LU-equivalent \cite{Adam_SLOCC_2020}. It is now shown that 
\begin{thm}
There is only one LU-equivalent class of AME$(4,3)$ states or, equivalently, one LU-equivalent class of 2-unitary gates of size 9.
\end{thm}

\begin{proof}

A universal entangler on $\mathbb{C}^d \otimes \mathbb{C}^d$ entangles every product state, and it is known that they do not exist in $d=2$ and $d=3$ \cite{ChenDuan2007}.
 The fact that there are no two-qutrit universal entanglers implies that for any two-qutrit gate $U \in \mathbb{U}(9)$ there exists a product state in $\mathbb{C}^3 \otimes \mathbb{C}^3$ such that  
\beq
U (\ket{\alpha_1} \otimes \ket{\beta_1})=\ket{\alpha_2} \otimes \ket{\beta_2}.
\eeq 
Writing $\ket{\alpha_1} \otimes \ket{\beta_1}=(v_1 \otimes v_2)\ket{11}$ and $\ket{\alpha_2} \otimes \ket{\beta_2}=(u^\dagger_1 \otimes u^\dagger_2)\ket{11}$, where $u_i$ and $v_i$ are single-qutrit unitary gates, it is easy to see that 
\beq\label{eq:entangler_lu}
U_1=(u_1 \otimes u_2)U(v_1 \otimes v_2)
\eeq 
such that
\beq
U_1(\ket{1} \otimes \ket{1})=\ket{1} \otimes \ket{1}.
\eeq
Denoting non-zero entries of $U_1$ by $*$ , the matrix form of $U_1$ becomes
\beq
U_1=\left(\begin{array}{ccc|ccc|ccc}
1 & 0 & 0 & 0 & 0 & 0 & 0 & 0 & 0 \\
0 & * & * & * & * & * & * & * & * \\
0 & * & * & * & * & * & * & * & * \\
\hline
0 & * & * & * & * & * & * & * & * \\
0 & * & * & * & * & * & * & * & * \\
0 & * & * & * & * & * & * & * & * \\
\hline
0 & * & * & * & * & * & * & * & * \\
0 & * & * & * & * & * & * & * & * \\
0 & * & * & * & * & * & * & * & * \\
\end{array}\right).
\eeq

{ If $U$ is a 2-unitary operator, the LU tranformation in Eq.(\ref{eq:entangler_lu}) will lead to $U_1$ given in the matrix form in Eq.(\ref{eq:2_uni_gen_form}). The steps given below explains the reduction in the number of non-zero entries upon imposing dual unitary and T-dual unitary constraints.}
\begin{enumerate}
\item
Dual unitarity: If $U_1^R$ is unitary, then the nine $3 \times 3$ blocks in $U_1$ are orthonormal to each other. This implies that all nonzero entries in the first block containing $1$ vanish and in all the remaining blocks the element in the first row and the first column vanish. Therefore, the dual unitary constraint implies that $U_1$ is of the form 
\beq
U_1=\left(\begin{array}{ccc|ccc|ccc}
1 & 0 & 0 & 0 & 0 & 0 & 0 & 0 & 0 \\
0 & 0 & 0 & * & * & * & * & * & * \\
0 & 0 & 0 & * & * & * & * & * & * \\
\hline
0 & * & * & 0 & * & * & 0 & * & * \\
0 & * & * & * & * & * & * & * & * \\
0 & * & * & * & * & * & * & * & * \\
\hline
0 & * & * & 0 & * & * & 0 & * & * \\
0 & * & * & * & * & * & * & * & * \\
0 & * & * & * & * & * & * & * & * \\
\end{array}\right).
\label{eq:Udual_gen_form}
\eeq
This provides the most general form of the non-local part of two-qutrit dual-unitary gates.
\item
T-dual unitarity: If  $U_1^{\Gamma}$ is unitary, $U_1$ takes the form
\beq
U_1=\left(\begin{array}{ccc|ccc|ccc}
1 & 0 & 0 & 0 & 0 & 0 & 0 & 0 & 0 \\
0 & 0 & 0 & 0 & * & * & 0 & * & * \\
0 & 0 & 0 & 0 & * & * & 0 & * & * \\
\hline
0 & 0 & 0 & 0 & * & * & 0 & * & * \\
0 & * & * & * & * & * & * & * & * \\
0 & * & * & * & * & * & * & * & * \\
\hline
0 & 0 & 0 & 0 & * & * & 0 & * & * \\
0 & * & * & * & * & * & * & * & * \\
0 & * & * & * & * & * & * & * & * \\
\end{array}\right).
\label{eq:2_uni}
\eeq

Note that the four columns 2,3,4, and 7 have only four potentially non-vanishing elements. Thus these form a 4-dimensional orthonormal basis themselves and imply that the elements in the corresponding rows of columns 5,6,8, and 9 vanish. Therefore, Eq.~(\ref{eq:2_uni}) further simplifies to 
\beq
U_1=\left(\begin{array}{ccc|ccc|ccc}
1 & 0 & 0 & 0 & 0 & 0 & 0 & 0 & 0 \\
0 & 0 & 0 & 0 & * & * & 0 & * & * \\
0 & 0 & 0 & 0 & * & * & 0 & * & * \\
\hline
0 & 0 & 0 & 0 & * & * & 0 & * & * \\
0 & * & * & * & 0 & 0 & * & 0 & 0 \\
0 & * & * & * & 0 & 0 & * & 0 & 0 \\
\hline
0 & 0 & 0 & 0 & * & * & 0 & * & * \\
0 & * & * & * & 0 & 0 & * & 0 & 0 \\
0 & * & * & * & 0 & 0 & * & 0 & 0 \\
\end{array}\right).
\label{eq:2_uni_gen_form}
\eeq 
The LU transformation reduces the maximum number of non-zero entries of the 2-unitary from 81 to 33. 
\end{enumerate}

From $U_1$, we perform a series of local unitary transformations, involving only $2\times 2$ unitaries such that $  U_1 \stackrel{LU}{\sim} P_9 $, reducing the total number of non-zero entries to only 9.  Here $P_9$ is the 2-unitary permutation 
that takes $(11,12,13,21,22,23,31,32,33)$ to $(11,23,32,33,12,21,22,31,13)$  and results in the AME$(4,3)$ state $\ket{\Psi_{P_9}}$ in Eq. (\ref{eq:p9_state}). The details of the LU transformations can be found in the appendix \ref{app:lutransformations_theroem_1}.

We have shown that every 2-unitary matrix $U$ in $\mathbb{C}^3 \otimes \mathbb{C}^3$ can be expressed in the form 
$U=(u'_1\otimes u'_2)\, P_9\, (v'_1 \otimes v'_2)$. The four $3\times 3$ unitaries, $u'_{1,2}$ and $v'_{1,2}$ define a $33$-dimensional manifold of 2-unitaries.

Hence, it is proven that there is only one LU class of 2-unitary gates of size 9 or, equivalently, there is only one AME($4,3$) state up to multiplication by local unitaries.
\end{proof}
}{ 
\begin{corollary}
For any 2-unitary two-qutrit gate $U\in \mathbb{U}(9)$, there exist orthonormal product bases $\lbrace \ket{\alpha_{i} \beta_{j}} : i,j=1,2,3 \rbrace $ and $\lbrace \ket{\alpha'_{i} \beta'_{j}} :i,j=1,2,3 \rbrace $ in $\mathbb{C}^3 \otimes \mathbb{C}^3$ such that
\begin{align}\label{eq:product_basis}
\begin{aligned}
U \ket{\alpha_{i} \beta_{j}} = \ket{\alpha'_{i} \beta'_{j}},~~i,j=1,2,3.
\end{aligned}
\end{align}
\end{corollary}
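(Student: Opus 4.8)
The plan is to read the corollary as an immediate consequence of Theorem~1, whose conclusion gives the factorization $U=(u_1\otimes u_2)\,P_9\,(v_1\otimes v_2)$ for suitable single-qutrit unitaries $u_1,u_2,v_1,v_2$. The only structural fact I need about $P_9$ is that it is a permutation matrix in the computational product basis: it sends each basis vector $\ket{ij}$ to another basis vector $\ket{k_{ij}\,l_{ij}}$, which is again a product state. Sandwiching a permutation of product vectors between local unitaries cannot destroy the product structure, and this single observation drives the whole argument.

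Concretely, I would first pull the computational basis back through the right-hand local gates, setting $\ket{\alpha_i\beta_j}:=(v_1^\dagger\otimes v_2^\dagger)\ket{ij}=(v_1^\dagger\ket{i})\otimes(v_2^\dagger\ket{j})$. Since $v_1^\dagger,v_2^\dagger$ are unitary, $\{\ket{\alpha_i}\}$ and $\{\ket{\beta_j}\}$ are orthonormal bases of $\mathbb{C}^3$, so $\{\ket{\alpha_i\beta_j}\}$ is an orthonormal product basis. I would then push these through $U$,
\[
U\ket{\alpha_i\beta_j}=(u_1\otimes u_2)\,P_9\ket{ij}=(u_1\ket{k_{ij}})\otimes(u_2\ket{l_{ij}}),
\]
which is manifestly a product vector. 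Defining $\ket{\alpha'_i\beta'_j}:=U\ket{\alpha_i\beta_j}$, unitarity of $U$ makes these nine images orthonormal, so they form a product basis, and the index-matched identity $U\ket{\alpha_i\beta_j}=\ket{\alpha'_i\beta'_j}$ holds by construction. No routine computation beyond this is required.

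The step I expect to require the most care is pinning down the precise meaning of ``product basis,'' which is really the conceptual content of the statement. It must be read as an orthonormal basis whose elements are individually product vectors, and \emph{not} as a bi-local (tensor) basis $\{\ket{\mu_i}\otimes\ket{\nu_j}\}$ whose first factor depends on $i$ alone. The latter reading is impossible for a genuine 2-unitary: if $U(\ket{\mu_i}\otimes\ket{\nu_j})=\ket{\mu'_i}\otimes\ket{\nu'_j}$ held for all $i,j$, then absorbing the four bases into unitaries $V_1,V_2,W_1,W_2$ would force $U=(W_1V_1^\dagger)\otimes(W_2V_2^\dagger)$, a product of local gates, contradicting the maximal nonlocality of a 2-unitary. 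This is precisely why, in the construction above, the output factors $\ket{k_{ij}}$ and $\ket{l_{ij}}$ depend on \emph{both} indices through the non-product permutation $P_9$: the image is a product basis but in general not a tensor basis, exactly matching the orthogonal quantum Latin square picture of the preliminaries.
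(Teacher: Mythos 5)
Your proof is correct and follows essentially the same route as the paper: both deduce the corollary from the Theorem~1 factorization $U=(u_1\otimes u_2)\,P_9\,(v_1\otimes v_2)$ and the fact that $P_9$ permutes the computational product basis, transporting the basis $\{\ket{ij}\}$ through the local unitaries on either side. Your closing observation---that the image set is a basis of product vectors rather than a tensor basis $\{\ket{\mu_i}\otimes\ket{\nu_j}\}$, since the latter reading would force $U$ to be a local gate---is a correct clarification of the notation $\ket{\alpha'_i\beta'_j}$ that the paper leaves implicit.
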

Let the action of the 2-unitary permutation $P_9$ on the compuational basis be given by $P_9 \ket{ij} = \ket{k_{ij}l_{ij}}$, where $i,j,k_{i,j} ,l_{i,j} \in \lbrace 1,2,3 \rbrace$. The set of product states $\lbrace\ket{k_{i,j} l_{i,j}} : i,j=1,2,3\rbrace$ also form a product basis in $\mathbb{C}^3 \otimes \mathbb{C}^3$. Any 2-unitary two-qutrit gate can be expressed as $U=(u^\dagger_1\otimes u^\dagger_2)\, P_9\, (v^\dagger_1 \otimes v^\dagger_2)$, where $u_{1,2}$ and $v_{1,2}$ are $3\times 3$ unitaries. This allows to define the product states $\ket{\alpha_i \beta_j} =( v_1 \otimes v_2) \ket{ij}$ and $\ket{\alpha'_i \beta'_i} =( u^\dagger_1 \otimes u^\dagger_2) \ket{k_{ij}l_{ij}}$ which satisfies the condition in Eq.~(\ref{eq:product_basis}). If these local unitary operators $u_{1,2}$ and $v_{1,2}$ are used in the LU transformation in Eq.~(\ref{eq:entangler_lu}), the reduction to $P_9$ occurs in one step.
}
\subsubsection{Illustration of the proof of Theorem 1}
\label{sec:illustration}
We illustrate the proof of Theorem 1 for generic two-qutrit ($d^2=9$) 2-unitary matrices. Such generic 2-unitary operators have all $d^4=3^4$ elements nonzero and can be obtained from the algorithms presented in Ref.~\cite{SAA2020}. The absolute values of the entries of one such realization of a generic 2-unitary of size $9$, denoted simply as $U$ below is shown in Fig.~(\ref{fig:illustration}.a). We use an algorithm similar to the one described in Ref.~\cite{Shrigyan_2022} that, given an bipartite unitary $U$, finds a pair of maximally entangled states $\ket{\Phi_1}$ and  $\ket{\Phi_2}$ such that $U \ket{\Phi_1}=\ket{\Phi_2}$. We use the modified algorithm to find a product state that remains a product state under the action of $U$.  The existence of such a product state is guaranteed due to the non-existence of a two-qutrit universal entangler gate \cite{ChenDuan2007}. 

\begin{figure*}
\begin{tabular}{ccccc}
\includegraphics[scale=0.23]{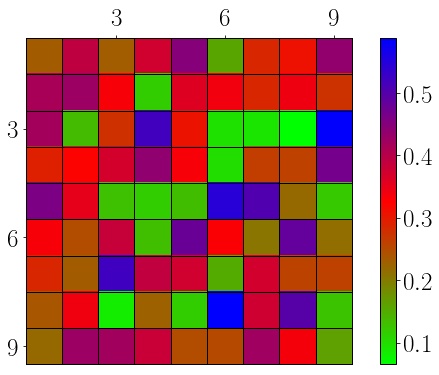}&
\includegraphics[scale=0.23]{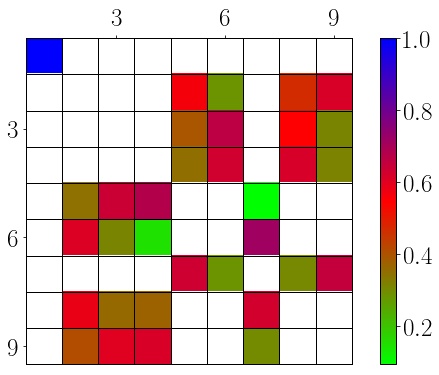}&
\includegraphics[scale=0.23]{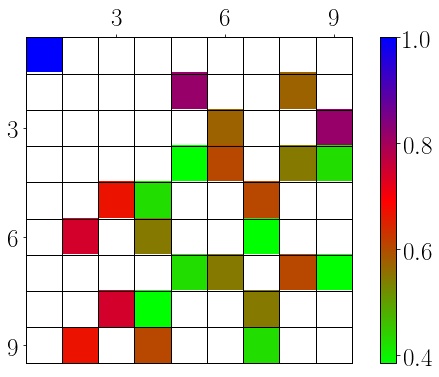}&
\includegraphics[scale=0.23]{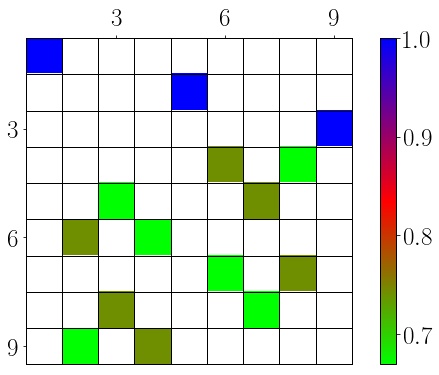}&
\includegraphics[scale=0.23]{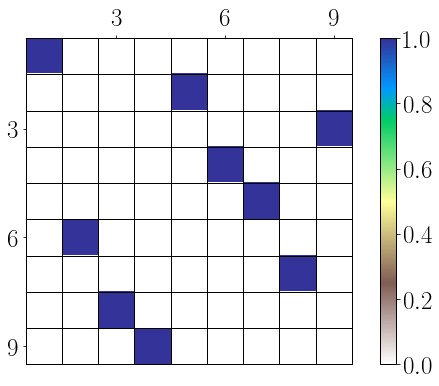}\\
(a)&(b)&(c)&(d)&(e)
\end{tabular}
\caption{(a) The absolute values of the entries of a generic 2-unitary of size $9$ obtained from the $M_{TR}$ algorithm presented in Ref.~\cite{SAA2020} are shown. All $d^4=81$  entries are nonzero and the rows and columns are entangled. The numbers $1$ to $9$ on the left indicate the row number and similarly, the numbers on the top indicate the column number (b) The absolute values of entries of the LU-equivalent 2-unitary matrix to $U$, $U_1=(u_1^{\dagger} \otimes v_1^{\dagger}) U (u_2 \otimes v_2)$. The only nonzero entry in the first block is equal to 1 and the number of nonzero entries in the 2-unitary matrix is $81-48=33$. (c,d) The absolute values of entries of the 2-unitary matrix are shown after performing local transformation defined in Eq.~(\ref{eq:step_2}) and Eq.~(\ref{eq:step_3}). (e) The local transformation defined in Eq.~(\ref{eq:step_4}) results in 2-unitary enphased permutation in which there is only one nonzero entry (of modulus $1$) in any row and column.}
\label{fig:illustration}
\end{figure*}

Using the product state obtained from the algorithm for the 2-unitary $U$, we illustrate the proof of Theorem 1 step-by-step for $U$ showing that it is LU-equivalent to the $P_9$ (2-unitary permutation) as follows:

{\blu Let $\ket{\Psi_X}$ and $\ket{\Psi_Y}$ be the product states found using the modified algorithm such that $U \ket{\Psi_X} = \ket{\Psi_Y} $. Define matrices $X$ and $Y$ with elements $X_{ij} = \braket{ij|\Psi_X}$ and $Y_{ij} = \braket{ij|\Psi_Y}$, respectively, where $i,j=1,2,...,d$. Let the singular value decompositions of the matrices $X$ and $Y$ are given by
\begin{align}
\begin{aligned}
X=a_1 D_1 b_1,~~~Y = a_2 D_2 b_2,
\end{aligned}
\end{align}
where $a_1,a_2,b_1,b_2$ are $d \times d$ unitary matrices and $D_1,D_2$ are diagonal matrices with only one nonzero entry that is equal to $1$. Rewriting the above equations in the vector form gives $\ket{\Psi_X} = (a_1 \otimes b_1^T) \ket{11}$ and $\ket{\Psi_Y} = (a_2 \otimes b_2^T) \ket{11}$. The unitary matrices obtained from these relations can be used to implement the LU transformation given in Eq.(\ref{eq:entangler_lu}) by setting $u_1 = a_1, u_2 = b_1^T, v_1 = a^\dagger_2$ and $v_2 = b_2^*$. 
The transformation results in a 2-unitary matrix whose nonzero entries are shown in Fig.~(\ref{fig:illustration} b). } Note that the number of nonzero entries is reduced from $81$ to $33$. The non-existence of a two-qutrit universal entangler together with 2-unitarity conditions imply that any 2-unitary of size $9$ is of the form shown in Fig.~(\ref{fig:illustration} b). The matrices $U_1^R$ and $U_1^{\Gamma}$ also have exactly the same structure for the non-zero entries.

We apply a sequence of local unitary transformations defined in Eqs.~(\ref{eq:step_2})--(\ref{eq:step_4}) to further simplify the 2-unitary matrix. The result of these local transformations is shown in Figs.~(\ref{fig:illustration} c)--~(\ref{fig:illustration} e). 
{ 
\section{Computable and complete set of LU-invariants}
In this section, we provide a complete set of LU-invariants that, in principle, allow us to determine if two given operators are LU-equivalent.  A family of LU-invariants for a general matrix $A \in \mathbb{C}^d \otimes \mathbb{C}^d$ may be constructed as follows. Take any natural number $n$ and permutations $\sigma,\tau,\rho,\lambda \in S_n$ - the symmetric group on $\{1,\cdots,n\}$. From this data compute the following number: $A( \sigma,\tau,\rho,\lambda) =$
\beq
\label{eq:LU_inv}
 A^{i_1j_1}_{k_1l_1}  \cdots A^{i_nj_n}_{k_nl_n} \;
(A^{\dagger})^{k_{\rho(1)}l_{\lambda(1)}}_{i_{\sigma(1)}j_{\tau(1)}} \cdots (A^{\dagger})^{k_{\rho(n)}l_{\lambda(n)}}_{i_{\sigma(n)}j_{\tau(n)}},
\eeq
where the sum over repeated indices is assumed.
It is straightforward to check that this is an invariant of LU-equivalence. The content of Propositions 8 and 20 of Ref.~\cite{VijayK} is that the collection of all these numbers is a complete LU-invariant. In other words, $A, B $, not necessarily unitary, are LU-equivalent, if and only if $A(\sigma,\tau,\rho,\lambda) = B(\sigma,\tau,\rho,\lambda)$ for every $n$ and choice of $\sigma,\tau,\rho,\lambda \in S_n$. In the present work, this is used to show that matrices $A$ and $B$ are not LU-equivalent by displaying $\sigma,\tau,\rho,\lambda \in S_n$ for some $n$ such that $A(\sigma,\tau,\rho,\lambda) \neq B(\sigma,\tau,\rho,\lambda)$. 

In general, for the LU-invariant $A(\sigma,\tau,\rho,\lambda)$ to be nontrivial and not obtainable from a smaller value of $n$, the four permutations of length $n$ must form a $4 \times n$ Latin rectangle {i.e., there are $n$ different symbols in all four rows and $4$ different symbols in all $n$ columns. For $n<4$, there are repetitions of symbols and the resulting invariant reduces to a function of either trivial invariant $\Tr A A^{\dagger}$ or known invariants based on matrix rearrangements such as realignment and partial transpose.}

{Our main interest is to distinguish 2-unitary operators that are not LU equivalent. The problem of LU-equivalence for 2-unitaries is specially hard because all known LU invariants based on matrix rearrangements such as realignment and partial transpose are constants \cite{Ma2007,Bhargavi2017}. For distinguishing 2-unitaries that are not LU equivalent, we need to choose the four permutations in Eq.~\ref{eq:LU_inv} in such a way that the resulting invariant does not reduce to a function of known invariants involving realignment and partial transpose matrix rearrangements. A possible choice of such permutations for $n=4$ is 
\beq
\label{eq:perm_ind}
\begin{split}
&\sigma=(1\,2\,3\,4), \, \tau=(2\,1\,4\,3), \\ & \rho=(3\,4\,1\,2), \, \lambda=(4\,3\,2\,1).
\end{split}
\eeq
Note that  the above four permutations arranged in a $4\times4$ arrangement form a Latin square of size 4 and one of them can be chosen to be identity. The resulting LU-invariant is equal to
\beq
\label{eq:Invar_n_4}
A^{i_1j_1}_{k_1l_1}  A^{i_2 j_2}_{k_2 l_2}  A^{i_3 j_3}_{k_3l_3}  A^{i_4 j_4}_{k_4l_4} 
 (A^{\dagger})^{k_3 l_4}_{i_1  j_2}   (A^{\dagger})^{k_4 l_3}_{i_2  j_1}  (A^{\dagger})^{k_1 l_2}_{i_3  j_4}  (A^{\dagger})^{k_2 l_1}_{i_4  j_3} ,
\eeq
and is useful in distingushing  2-unitaries that are not LU equivalent as discussed below.
 }
}
\section{ LU-equivalence classes of AME$(4,d)$ states in $d\geq 4$}
{ In this section, using the concept of LU-invariants, we study the LU-equivalence classes of AME states in $d\geq 4$.}

\begin{thm}
The number of LU-equivalence classes of 2-unitary  gates of size $d^2$ (equivalently, the number of LU-equivalence classes of AME states of four qudits) for $d\geq4$ is infinite.
\end{thm}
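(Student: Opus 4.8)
The plan is to exhibit, for each $d \geq 4$, an explicit family of 2-unitary operators parametrized by a continuous parameter and to show that the complete LU-invariant $A(\sigma,\tau,\rho,\lambda)$ from Eq.~(\ref{eq:Invar_n_4}) varies continuously (and non-constantly) along the family, taking uncountably many values. Since each distinct value of a genuine LU-invariant forces the corresponding operators into distinct LU-classes, this immediately yields an infinity (indeed a continuum) of classes. The starting point differs slightly according to whether OLS exist: for $d \geq 4$ with $d \neq 6$, I would begin from a 2-unitary permutation $P$ built from a pair of orthogonal Latin squares of order $d$ (guaranteed to exist by the introductory discussion), and for $d = 6$ I would instead use the golden-AME 2-unitary of Ref.~\cite{SRatherAME46} as the base point, since no OLS of order $6$ exist \cite{bose1960further}.

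The key construction is \emph{enphasing}: as recalled in the Preliminaries, multiplying all nonvanishing entries of a dual/T-dual unitary permutation by phases preserves both properties, so every enphasing of $P$ remains 2-unitary. I would therefore select a single nonzero entry (or a carefully chosen small set of entries) and multiply it by $e^{i\theta}$, producing a one-parameter family $P(\theta)$ of 2-unitary operators. The heart of the argument is then to compute the invariant $P(\theta)(\sigma,\tau,\rho,\lambda)$ for the permutations in Eq.~(\ref{eq:perm_ind}) and verify that, as a function of $\theta$, it is genuinely nonconstant. Concretely, I would expand Eq.~(\ref{eq:Invar_n_4}) for the permutation base point, track which monomials in the summand acquire phase factors under enphasing, and show that the $\theta$-dependence does not cancel; since the invariant is a finite sum of terms of the form $c_k e^{i m_k \theta}$ with integer frequencies $m_k$, nonconstancy over the continuum $\theta \in [0,2\pi)$ follows as soon as some term with $m_k \neq 0$ survives with nonzero coefficient.

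The main obstacle I anticipate is precisely ensuring that the phase does \emph{not} cancel out of the invariant. Because Eq.~(\ref{eq:Invar_n_4}) pairs each factor $A$ with a conjugate factor $A^{\dagger}$, a naive single-entry enphasing risks having every surviving monomial contain the enphased entry an equal number of times in $A$ and in $A^{\dagger}$, so that the phases cancel and the invariant stays constant. The resolution is to exploit the specific index structure of the chosen Latin-square permutations: the permutation $(\sigma,\tau,\rho,\lambda)$ was deliberately designed so that the invariant is \emph{not} expressible through realignment/partial-transpose rearrangements (which are constant on 2-unitaries \cite{Ma2007,Bhargavi2017}), and this same asymmetry is what lets an enphased entry appear an unequal number of times in the $A$- and $A^{\dagger}$-factors of at least one monomial. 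I would verify this surviving asymmetry directly for the chosen base permutation (and, for $d=6$, for the golden-AME operator), which reduces the whole theorem to a finite, explicit check that the coefficient of some $e^{i m\theta}$ with $m \neq 0$ is nonzero.

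Finally, I would assemble the pieces: the family $P(\theta)$ consists of 2-unitaries for all $\theta$; the invariant $\theta \mapsto P(\theta)(\sigma,\tau,\rho,\lambda)$ takes uncountably many values; by completeness and LU-invariance of the family in Eq.~(\ref{eq:LU_inv}) established in Ref.~\cite{VijayK}, two members with different invariant values lie in different LU-classes; hence there are uncountably many LU-equivalence classes of 2-unitary gates of size $d^2$, and correspondingly of AME$(4,d)$ states, for every $d \geq 4$. I would remark that the same enphasing-plus-invariant strategy generalizes to multi-unitary operators and thus to AME states of more parties, matching the claim made in the introduction.
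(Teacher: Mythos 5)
Your overall strategy --- enphase a 2-unitary and show that a permutation-based invariant of the form in Eq.~(\ref{eq:LU_inv}) varies continuously --- is exactly the paper's, but the step you yourself flag as ``the main obstacle'' is precisely where your argument has a genuine gap, and your proposed resolution does not close it. You argue that because the permutations of Eq.~(\ref{eq:perm_ind}) are not reducible to realignment/partial-transpose invariants, an enphased entry must appear an unequal number of times among the $A$- and $A^{\dagger}$-factors of some surviving monomial; that is a non sequitur. Moreover, even granting that a direct computation settles non-constancy for one particular $d$, the theorem quantifies over all $d\geq 4$, so ``a finite, explicit check'' per dimension amounts to an infinite family of unproved claims. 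The paper supplies the missing uniform argument in two ways. First, a counting argument: multi-subsets $X,Y$ of the support of $P$ with identical component marginals $I,J,K,L$ correspond to solutions of at most $4d-3$ homogeneous linear equations in the $d^2$ unknowns $F(i,j)-G(i,j)$, so for $d\geq 4$ a pair of \emph{distinct} such multi-subsets exists; equal marginals then yield permutations $\sigma,\tau,\rho,\lambda\in S_N$ (with $N$ determined by the solution, not fixed at $4$), and since every term of the invariant of an enphased permutation is a monomial $\alpha^{a}\overline{\alpha}^{b}$ occurring with positive multiplicity, the single term $\alpha^{X(s)}\overline{\alpha}^{Y(s)}$ with $X(s)\neq Y(s)$ already forces non-constancy --- no cancellation between terms is possible. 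Second, a constructive variant: taking $P$ from a pair of orthogonal \emph{diagonal} Latin squares (which exist for all $d\geq 4$, $d\neq 6$) and letting $X,Y$ be the main diagonal and the anti-diagonal gives the required distinct multi-subsets explicitly with $N=d$. Either of these is the key idea your proposal is missing.

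For $d=6$ there is a second gap: $\mathcal{U}_{36}$ is not a permutation, and multiplying a single nonzero entry by $e^{i\theta}$ destroys unitarity, because rows and columns containing several nonzero entries impose phase relations among them. The paper must exhibit a specific diagonal $\mathcal{D}(\theta)$ acting on four rows simultaneously and verify, through the structure of $\mathcal{U}_{36}^{R}$ and $\mathcal{U}_{36}^{\Gamma}$, that the result remains 2-unitary before evaluating the $n=4$ invariant to obtain $C_0+6\cos\theta$. Your parenthetical ``carefully chosen small set of entries'' points in the right direction, but identifying that set and proving that 2-unitarity survives is the substantive content of this case, not an afterthought.
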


\begin{proof} 
{
Consider first the case of $d \geq 4$ and $d \neq 6$.  
We observed that 2-unitary permutations remain 2-unitary under enphasing-- multiplication of all nonvanishing (unit) elements by phases. However, such 2-unitaries are not necessarily LU equivalent.  
}

What we actually see is that given one permutation gate, there are infinitely many LU-equivalence classes of enphased permutation gates of the same size, the method of proof necessitating the restriction $d \neq 6$ - since permutation 2-unitary gates (which are in bijection with orthogonal Latin squares) are known to exist for all $d \geq 4$ except for $d=6$.

Fix a permutation $P$ of size $d^2$. For every $i,j \in [d] = \{1,2,\cdots,d\}$, there exist unique $k,l \in [d]$ such that $P^{ij}_{kl} = 1$. 
Let $S \subseteq [d]^{\times 4}$ be the set of all $(i,j,k,l)$ as $i,j$ vary over $[d]$ and $k,l$ are the unique elements with 
$P^{ij}_{kl} = 1$.
For $t=1,2,3,4$, let $\pi_t:S \rightarrow [d]$ be the $t^{\text{th}}$ component function, so that, for instance, $\pi_3((i,j,k,l)) = k$.
Note that $|S| = d^2$ and that for each $k \in [d]$, there are exactly $d$ elements  $s \in S$ with  $\pi_3(s) =k$, and similarly for 
each $l \in [d]$, there are exactly $d$ elements $s \in S$ with  $\pi_4(s) = l$.

We will be interested in multi-subsets of $S$, consisting of sets that have elements of $S$ that could be repeated. Such multi-subsets construct the invariant in Eq.~(\ref{eq:LU_inv}) and define functions $X(s)$ from $S$ to $\{0,1,2,3,\cdots\}$  counting the number of times $s$ occurs in $X$. Any such multi-subset $X$ also determines four functions from $[d]$ to $\{0,1,2,3,\cdots\}$. These are $I_X,J_X,K_X,L_X$ where 
$$I_X(p) = \sum_{s \in S, \pi_1(s) = p} X(s),$$ with analogous definitions for $J_X,K_X$ and $L_X$. These count how many times $p$ occurs as a first, second, third, or fourth component of elements of X.

We claim that there are two distinct multi-subsets $X,Y$ of $S$ for which all these functions are identical. To see this, note that a multi-subset $X$ of $S$ corresponds naturally to a function $F: [d] \times [d] \rightarrow \{0,1,2,3,\cdots\}$. For given such a function, we could define $X: S \rightarrow  \{0,1,2,3,\cdots\}$ by $X((i,j,k,l)) = F(i,j)$ and conversely, given $X$, we may define $F(i,j) = X((i,j,k,l))$ where  $k,l$ are the unique elements with $P^{ij}_{kl} = 1$. 

Say $X$ corresponds to $F$ and $Y$ to $G$.
The condition that $I_X = I_Y$ is given by $\sum_j F(i,j) = \sum_j G(i,j)$, for each $i \in [d]$. Similarly the condition that $J_X = J_Y$ is given by $\sum_i F(i,j) = \sum_i G(i,j)$, for each $j \in [d]$. The condition that $K_X = K_Y$ is not as easily expressed since it depends on the permutation $P$, but it is clear that is given by a sum of $d$ $F(i,j)$s
in the LHS and the corresponding $G(i,j)$s
in the RHS where the $(i,j)$ vary over those for which the corresponding $k$'s are equal, for each $k \in [d]$. A similar statement holds for  when
$L_X = L_Y$. 

To summarise, two multi-subsets $X,Y$ corresponding to functions $F,G$ have the same $I,J,K,L$ functions exactly (the $K$ and $L$ functions should not be confused with the Latin square symbols) when $4d$ homogeneous linear equations in $F(i,j) - G(i,j)$ are satisfied. However these equations are not independent because the sum of all the $F(i,j)$ coincides with the sum of all $G(i,j)$ once $I_X = I_Y$ and so we need to consider only $d-1$ equations for each of the $J,K$, and $L$. The actual number of equations is thus at most $4d-3$. The number of variables is $d^2$, namely the
$F(i,j) - G(i,j)$. 
For $d \geq 4$, $d^2 > 4d-3$, that is, the number of variables is greater than the number of equations. Since it is a system of homogeneous equations with more variables than equations, these equations have a non-trivial solution. The coefficients in the system of equations are rational, and hence a rational solution exists. Further, by clearing all the denominators by
multiplying by some number, we may assume that this solution is integral and then choose each $F(i,j)$ and $G(i,j)$ to be non-negative. The homogeneity of the equations ensures that this remains a solution.

Finally, we have two distinct multi-subsets $X,Y$ of $S$ such that for any $p$, the number of elements of $X$ with first component $p$ equals the number of elements of $Y$ with first component $p$, and similarly for the other three components too. Say these multi-subsets have $N$ elements each. This condition implies that there exist permutations
$\sigma, \tau,\rho,\lambda \in S_N$ such that if $X$ is enumerated (arbitrarily) as $\{(i_1,j_1,k_1,l_1), \cdots, (i_N,j_N,k_N,l_N)\}$
then $Y = \{(i_{\sigma(1)},j_{\tau(1)},k_{\rho(1)},l_{\lambda(1)}), \cdots, (i_{\sigma(N)},j_{\tau(N)},k_{\rho(N)},l_{\lambda(N)})\}$. Note that $N=\sum_{i,j}F(i,j)=\sum_{i,j}G(i,j)$.

Since $X$ and $Y$ are distinct, there is an $s=(i,j,k,l) \in S$ for which $X(s) \neq Y(s)$. Let $Q$ be the 2-unitary obtained from a 2-unitary permutation $P$ by setting $Q^{ij}_{kl} = \alpha \in S^1$ with other entries untouched. The invariant $Q(\sigma,\tau,\rho,\lambda)$, by definition, is a sum of terms each of which is a monomial in $\alpha,\overline{\alpha}$. It suffices to see that one of these terms is a non-zero power of $\alpha$  for this polynomial to take infinitely many values as $\alpha$ ranges over $S^1$. But this is true because the term corresponding to $\{(i_1,j_1,k_1,l_1), \cdots, (i_N,j_N,k_N,l_N)\}$ is $\alpha^{X(s)}\overline{\alpha}^{Y(s)}$ which is a non-zero power of $\alpha$.
\end{proof}

To illustrate that enphasing of 2-unitary permutations leads to different LU-equivalence classes in { $d = 4$}, consider the permutation denoted $P_{16}$ and $(P_{16})^{ij}_{kl}$ elements are such that $(kl)$ are ordered as $(11,44,22,33,43,12,34,21,24,31,13,42,32,23,41,14)$ when $(ij)$ is in the lexicographic ordering $(11,12, \cdots, 44)$. Let $P_{16}(\theta)$ denote the 2-unitary obtained from $P_{16}$ by changing only $(P_{16})^{11}_{11}$ from $1$ to $e^{i \theta}$.

{The invariant given in Eq.~\ref{eq:Invar_n_4} for} $P_{16}(\theta)$ evaluates to the following simple continuous function of $\theta$:
\beq
\label{eq:P16_LUI}
P_{16}\left(\theta; \sigma,\tau,\rho,\lambda\right)=8(29+3\cos \theta).
\eeq 
 As $ \theta $ ranges in $(-\pi,\pi]$, the invariant takes infinitely many distinct values and so the corresponding 2-unitaries are not LU-equivalent.

While an interpretation of this invariant is not clear, it can be related to a moment of an operator on two copies, or four parties $A,B,C,$ and $D$ with $U$ acting on the pairs $(A,B)$ and $(C,D)$:
\beq
L[U]:=(S_{BD} \otimes \mathbb{I}_{AC}) (U^{\dagger} \otimes U^{\dagger})(S_{BD} \otimes \mathbb{I}_{AC}) (U \otimes U),
\label{eq:L_U_op}
\eeq
where $S_{BD}$ is the {\sc SWAP} gate between subsystems $B$ and $D$. { Representing bipartite operator with a tensor having two incoming and two outgoing indices, diagrammatic representation of $L[U]$ in terms of bipartite unitary operators $U$ and $SUS$ (where $S$ is the SWAP gate) is given by
 \beq
\includegraphics[scale=0.3]{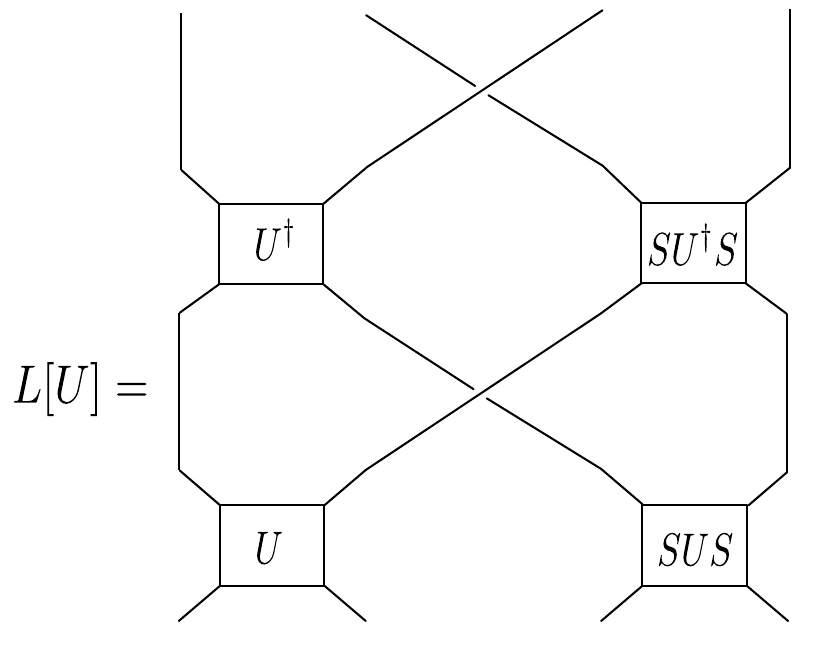}
\eeq

It can be easily checked that $L[U]=L[SUS]$ and all the moments, $\Tr L^k[U];k=1,2,3,\cdots$, are local unitary invariants. For example,} $\Tr  L[U]= \Tr (U^R U^{R\dagger})^2$ is related to the operator entanglement of $U$ \cite{Zanardi2001,Zyczkowski2004,SAA2020} and the invariant in Eq.~(\ref{eq:P16_LUI}) is equal to the second moment $\Tr  L^2[P_{16}(\theta)]$. Note that $\Tr L[U]$ is equal to $d^2$ for dual unitaries and thus cannot distinguish dual unitaries in different LU-equivalence classes.

\subsection{Proof based on Orthogonal Diagonal Latin Squares}\label{ODLS}

In this section, we give a constructive proof for Theorem 2 for the case $ d\geq 4 $ and $ d \neq 6 $ using special orthogonal Latin squares. { \blu We find explicit examples of multi-sets with desired properties discussed above, and construct the four permutations $\sigma,\tau,\rho,\lambda$}.

Consider a Latin square of order $ d $ with elements from the set $[d]= \lbrace1,2,..,d \rbrace $. {\blu A transversal of a Latin square is a set of $d$ distinct entries such that no two entries share the same row or column.} A diagonal Latin square is one in which both the main diagonal and the main back (or ``anti-") diagonal are transversals.

Two Latin squares $ \lK $ and $ \lL $ form a pair of orthogonal diagonal Latin squares (ODLS) if both are diagonal Latin squares and orthogonal.  It is known that ODLS's exist for every order $ d $ except 2,3 and 6 \cite{brown2017completion}. 
An example of a pair of orthogonal diagonal Latin squares in $ d=4 $ is given below
\begin{align}
\label{eq:ODLS1}
\begin{aligned}
ODLS(4) = \begin{array}{|c|c|c|c|}
\hline 
1,2&3,3&4,1&2,4\\
\hline
4,4&2,1&1,3&3,2\\
\hline
2,3&4,2&3,4&1,1\\
\hline
3,1&1,4&2,2&4,3\\
\hline 
\end{array}
\end{aligned}.
\end{align}
Given a pair of orthogonal Latin squares $ \lK $ and $ \lL $, we can construct a 2-unitary operator $ P $ as follows
\begin{align}
\begin{aligned}
P = \sum_{i,j=1}^d \alpha_{i,j} \ket{i,j} \bra{\lK_{i,j}, \lL_{i,j}},
\end{aligned}
\end{align} 
where $ \alpha_{i,j} \in S^{1} $. { \blu For our purpose, we set $  \alpha_{i,j} =1,$  for all $i,j$ except $ \alpha_{1,1}=\alpha \neq 1$.}

We choose both $ \lK $ and $ \lL $ to be diagonal Latin squares such that they form a pair of orthogonal diagonal Latin squares. Let $ S \subseteq [d]^{\times 4}$ be the set of all $ (i,j,\lK_{i,j},\lL_{i,j}) , ~i,j \in [d]$ such that $ P^{i,j}_{\lK_{i,j},\lL_{i,j}}$ is non-zero. It is noted that the multi-subsets of $ S $ construct the invariant in Eq.~(\ref{eq:LU_inv}).
{ \blu Consider the following subsets of $ S $ constructed using the addresses and elements of main and back diagonals of $\lK$ and $\lL$}:
\begin{align}\label{eq:multisets}
\begin{aligned}
X &= \lbrace (i,i,\lK_{i,i},\lL_{i,i})  , ~i\in [d]\rbrace,\\
Y &= \lbrace (i,d+1-i,\lK_{i,d+1-i},\lL_{i,d+1-i}) ,~ i\in [d] \rbrace.\\
\end{aligned}
\end{align}
An element $ (i,i,\lK_{i,i},\lL_{i,i})  \in X $ is different from any element  $ (j,d+1-j,\lK_{j,d+1-j},\lL_{j,d+1-j})  \in Y $ except the case when $ d $ is odd and $ i=j=(d+1)/2 $. Therefore, $ X $ and $ Y $ are distinct. However, for the multi-subset $X$ the functions $I_X,J_X,K_X,L_X$ are identical with the corresponding functions for the multi-subset $Y$, both being constant functions (equal to 1).  For the example in Eq.~(\ref{eq:ODLS1}), $X=\{(1,1,1,2),(2,2,2,1),(3,3,3,4),(4,4,4,3)\}$ and $Y=\{(1,4,2,4),(2,3,1,3),(3,2,4,2),(4,1,3,1)\}.$ 

The four permutations $\sigma,\tau, \rho,\lambda$ can be found by inspection since all the functions $I,J,K,L$ evaluates to 1 for the subsets $X$ and $Y$. Note that the sets $ \lbrace \lK_{i,i},~i \in[d]\rbrace$  and $\lbrace \lK_{i,d+1-i},~i \in[d]\rbrace $ contains elements in the main diagonal and the back diagonals of $ \lK $, respectively. Since $ \lK $ is assumed to be a diagonal Latin square, these two sets are related by a permutation. This gives the permutation $ \rho \in S_d $. A similar argument can be given in the case of sets  $ \lbrace \lL_{i,i},~i \in[d]\rbrace$  and $\lbrace \lL_{i,d+1-i},~i \in[d]\rbrace $, and the corresponding permutation is $ \lambda \in S_d$. It is also evident that the set $ [d] $ and $ \lbrace d+1-i  , ~i\in[d]\rbrace$ are related by a permutation
\begin{align}
\begin{aligned}
\tau = \begin{pmatrix}
1&2&...&d\\
d&d-1&...&1
\end{pmatrix}.
\end{aligned}
\end{align}

Therefore, if we enumerate elements in $ X $ as $ \lbrace (i_1,j_1,k_1,l_1),\cdots, (i_d,j_d,k_d,l_d)\rbrace $, then $ Y = \lbrace (i_1,j_{\tau(1)},k_{\rho(1)},l_{\lambda(1)}),\cdots, (i_d,j_{\tau(d)},k_{\rho(d)},l_{\lambda(d)}) \rbrace $. The permutation $ \rho $ is identity in this case.
Note that the element $ (1,1,\lK_{1,1},\lL_{1,1})  \in X$ does not belong to $ Y $. Then, the term corresponding to $ \lbrace (i_1,j_1,k_1,l_1),\cdots, (i_d,j_d,k_d,l_d)\rbrace $ in the LU invariant $ P( 1, \tau,\rho,\lambda) $ evaluates to $ \alpha $. Therefore $ P( 1, \tau,\rho,\lambda) $ can have infinitely many values as $ \alpha $ is a continuous parameter. Hence it shows that there exists an infinite number of LU-equivalence classes of 2-unitary gates for $ d \geq4 $ except $ d=6 $. {\blu In $d=3$, as proven earlier, there is only one LU-equivalence class of AME(4,3) states. This is consistent with the fact that there are no ODLS in $d=3$ \cite{brown2017completion}.}

\subsection{ Special case of $d=6$:}
Due to the non-existence of orthogonal Latin squares of size 6 \cite{bose1960further}, 2-unitary permutations of size 36 do not exist \cite{Clarisse2005} and we need to treat this case separately. However, it was shown recently in Ref.~\cite{SRatherAME46} that a 2-unitary matrix of size 36 denoted as $\mathcal{U}_{36}$, or, equivalently, AME state of four six-level systems, AME$(4,6)$, $\ket{\Psi_{\mathcal{U}_{36}}}$ exists. This settled positively a long-standing open problem in quantum information theory \cite{Open2020}. 

The number of non-zero elements of $\mathcal{U}_{36}$, equivalently, coefficients of $\ket{\Psi_{\mathcal{U}_{36}}}$ is $112$, and involve the $20^{\text{th}}$ root of unity $ \omega=\exp(2 \pi i /20)$ and the real numbers \cite{SRatherAME46}:
\beq
\label{abc}
a=\left(5+\sqrt{5}\right)^{-1/2},\, b/a=\varphi, \, c=1/\sqrt{2},
\eeq
where $\varphi=(1+\sqrt{5})/2$ is the golden ratio. For the sake of completeness, we show the nonzeros matrix elements of the 2-unitary $\mathcal{U}_{36}$ corresponding to the {\em golden} AME($4,6$) state \cite{SRatherAME46} in Fig.~(\ref{fig:Table_AME_46}). The pair of indices $(k,l)$ shown in rows label the rows and the $(i,j)$  shown in columns label the columns in $\mathcal{U}_{36}$. 

\begin{figure*}
\begin{tabular}{c}
\includegraphics[scale=0.23]{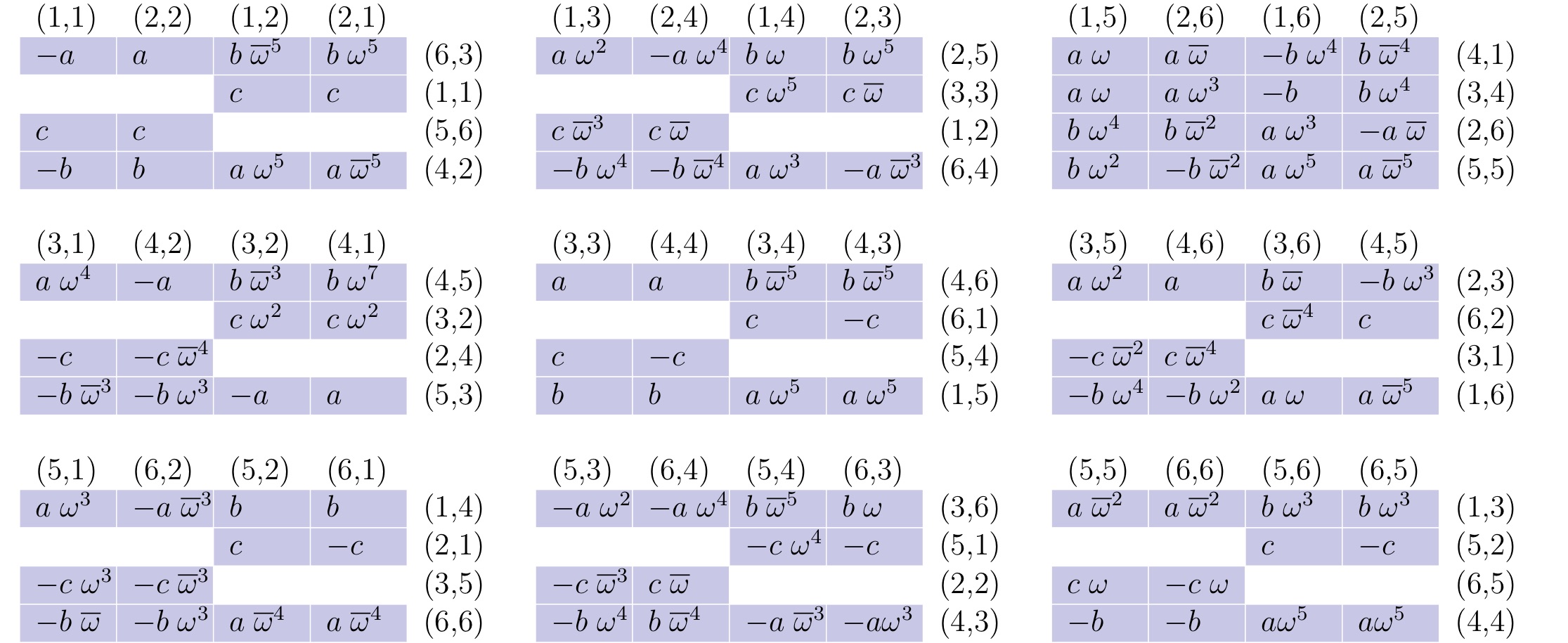}
\end{tabular}
\caption{Non-vanishing matrix elements of the 2-unitary $\mathcal{U}_{36}$. The pair of indices $(i,j)$ indicated in rows label the relevant row, $j+6(i-1)$, and the pair of indices $(k,l)$ indicated in columns determine the relevant column, $l+6(k-1)$. $\overline{\omega}$ denotes the complex conjugate of $\omega$.}
\label{fig:Table_AME_46}
\end{figure*}

Here, we show that one can obtain an infinite number of 2-unitaries from $\mathcal{U}_{36}$ by multiplying it with appropriate diagonal unitaries. Unlike 2-unitary permutations,  $ \mathcal{U}_{36}$ does not remain 2-unitary under multiplication by diagonal unitaries with arbitrary phases. In order to preserve 2-unitarity, one needs to multiply particular rows or columns of the given 2-unitary with specially designed phases depending on its structure. The simplest example in the case of $\mathcal{U}_{36}$ is the one-parameter family of 2-unitaries
\beq
\label{eq:one_param_U_36}
\mathcal{U}_{36} (\theta)=\mathcal{D}(\theta)\, \mathcal{U}_{36},
\eeq
where $
\mathcal{D}(\theta)=\text{Diag}[e^{i \theta},1,1,e^{i \theta},1,1,e^{i \theta},1,1,e^{i \theta},1_{26}]$
and $\theta \in (-\pi, \pi]$. The notation $1_{K}$ is to indicate a length- $K$ string of $1$s.
This yields an infinity of AME$(4,6)$ states parameterized by $\theta$ under the correspondence in Eq.~(\ref{eq:AME_U}). That this enphasing retains the 2-unitary property is not evident, but follows from the observation that 
$[\mathcal{U}_{36}(\theta)]^R= \mathcal{U}_{36}^R \, \mathcal{D}'(\theta)$ and $[\mathcal{U}_{36}(\theta)]^{\Gamma}= \mathcal{D}''(\theta) \mathcal{U}_{36}^{\Gamma}$, where $\mathcal{D}'(\theta)=\text{Diag}[e^{i \theta}, e^{i \theta}, 1_{16},e^{i \theta},e^{i \theta},1_{16}]$, and $\mathcal{D}''(\theta)=\text{Diag}[e^{i \theta}, e^{i \theta}, 1,1,1,1,e^{i \theta},e^{i \theta},1_{28}] $.

We show that these 2-unitaries are not LU-equivalent by evaluating the invariant $\mathcal{U}_{36} (\theta; \sigma,\tau,\rho,\lambda)$ for permutation of indices given by Eq.~(\ref{eq:perm_ind}). The invariant evaluates to the following function of $\theta$:
\beq
\label{eq:U36_LUI}
\mathcal{U}_{36}(\theta; \sigma,\tau,\rho,\lambda)=C_0 +6 \cos \theta,
\eeq
where $C_0=3\left(202 + \sqrt{5} + 2\sqrt{5 - 2 \sqrt{5}}\right)/4 \approx 154.267$ is a constant. 
 The $\theta$ dependence proves that the invariant can take infinitely many distinct values and the corresponding $\mathcal{U}_{36}(\theta)$ are not LU-equivalent.The invariant in Eq.~(\ref{eq:U36_LUI}) is equal to $\tr L^2[\mathcal{U}_{36}(\theta)]=C_0 +6 \cos \theta,$.
The realignment and partial transpose of 2-unitaries provide other 2-unitaries and for $\mathcal{U}_{36}(\theta)$, one needs to evaluate the third moment $\tr  L^3[U]$ to show that these are not LU-equivalent. 
\subsection{25-parameter family of AME$(4,6)$ states}
{  Apart from the one-parameter family of enphasing discussed above, we give a more general construction consisting of $25$ real parameters.} Let ${\mathcal U} = {\mathcal U}_{36}$. This has 112 non-zero entries. 
Consider a matrix, say ${\mathcal V}$, obtained from ${\mathcal U}$ by 
multiplying each of these non-zero entries by a phase factor. For 
definiteness, suppose that
each non-zero ${\mathcal U}_{ij}$ is multiplied by $e^{i\theta_{ij}}$. 
We now try to understand under what conditions on the $\theta_{ij}$ is 
the new matrix ${\mathcal V}$ also 2-unitary.

First ${\mathcal V}$ must be unitary. Since its rows are still of norm 
1, only the orthogonality of the rows needs to be ensured. Take any two 
rows say $i_1,i_2$ of ${\mathcal V}$. Suppose that the columns where 
both these rows have non-zero entries are $j_1,j_2,...$ (at most 4, 
from the structure of ${\mathcal U}$). For the inner-product of these 
rows to vanish it suffices that $\theta_{i_1j_1} - \theta_{i_2j_1} = 
\theta_{i_1j_2} - \theta_{i_2j_2} = \theta_{i_1j_3} - \theta_{i_2j_3} 
= ... $. This is a set of homogeneous linear equations in the 
$\theta_{ij}$. Similarly for ${\mathcal V}^R$ and ${\mathcal 
V}^\Gamma$ to be unitary we get other homogeneous linear equations in 
the $\theta_{ij}$.

Writing out all these homogeneous linear equations for the 
$\theta_{ij}$, we get a system of 246 equations - 75 for ${\mathcal 
V}$, 87 for ${\mathcal V}^\Gamma$ and 84 for ${\mathcal V}^R$ - in 112 
variables. The rank of the coefficient matrix can be computed to be 87 
using, say Mathematica, thereby yielding a 25-dimensional solution 
space. This is the required 25-dimensional family of 2-unitary 
enphasings of  ${\mathcal U}$.

Apart from solving the difficult problem of establishing LU-equivalence classes for AME states of four parties or 2-unitary operators in any local dimension, the methods developed herein can be extended both to unequal local dimensions and to more parties. This requires as many permutations as the number of parties to construct the invariants. That the case of qutrits are special and have only one class needs further elucidation in terms of the geometry of the set of 2-unitaries in this case. We hope that these results pave for a deeper understanding of multipartite states and new entanglement measures.

\begin{acknowledgments}
NR acknowledges funding from the Center for Quantum Information Theory in Matter and Spacetime, IIT Madras. Funding support from the Department of Science and Technology, Govt. of India, under Grant No. DST/ICPS/QuST/Theme-3/2019/Q69 and MPhasis for supporting CQuiCC are gratefully acknowledged. We thank S. Aravinda for initiating discussions on the topic.
\end{acknowledgments}

\bibliographystyle{unsrt}
\bibliography{ent_local}

\onecolumngrid

\appendix

\section{LU transformations involved in proof of the theorem 1}\label{app:lutransformations_theroem_1}
{
There exist no universal entanglers in $d=3$. This result allows us to find a unitary operator $U_1$ that is LU-equivalent to a given two-qutrit operator $U $ such that the entry in the first column and first row of $U_1$ is equal to 1. The corresponding LU transformation, given in Eq. (\ref{eq:entangler_lu}), is restated here for completeness:
\beq
U_1=(v_1^{\dagger} \otimes v_2^{\dagger})U(u_1 \otimes u_2),
\eeq 
where $u_{1,2}$ and $v_{1,2}$ are single-qutrit unitary gates.

Requiring $U$ to be 2-unitary and imposing 2-unitarity constraints will lead to the following matrix form of $U_1$:
\beq
U_1=\left(\begin{array}{ccc|ccc|ccc}
1 & 0 & 0 & 0 & 0 & 0 & 0 & 0 & 0 \\
0 & 0 & 0 & 0 & * & * & 0 & * & * \\
0 & 0 & 0 & 0 & * & * & 0 & * & * \\
\hline
0 & 0 & 0 & 0 & * & * & 0 & * & * \\
0 & * & * & * & 0 & 0 & * & 0 & 0 \\
0 & * & * & * & 0 & 0 & * & 0 & 0 \\
\hline
0 & 0 & 0 & 0 & * & * & 0 & * & * \\
0 & * & * & * & 0 & 0 & * & 0 & 0 \\
0 & * & * & * & 0 & 0 & * & 0 & 0 \\
\end{array}\right).
\eeq

From here, we apply appropriate local unitary transformations to simplify further.}

In the following step, we label relevant non-zero entries and represent $ U_1 $ as
\begin{align}
\begin{aligned}
U_1=\left(\begin{array}{ccc|ccc|ccc}
1 & 0 & 0 & 0 & 0 & 0 & 0 & 0 & 0 \\
0 & 0 & 0 & 0 & p_{11} & p_{12} & 0 & q_{11} & q_{12} \\
0 & 0 & 0 & 0 & p_{21} & p_{22} & 0 & q_{21} & q_{22} \\
\hline
0 & 0 & 0 & 0 & * & * & 0 & * & * \\
0 & * & * & * & 0 & 0 & * & 0 & 0 \\
0 & * & * & * & 0 & 0 & * & 0 & 0 \\
\hline
0 & 0 & 0 & 0 & * & * & 0 & * & * \\
0 & * & * & * & 0 & 0 & * & 0 & 0 \\
0 & * & * & * & 0 & 0 & * & 0 & 0 \\
\end{array}\right).
\end{aligned}
\end{align}
Consider the matrices
\begin{align}
\begin{aligned}
P = \left(\begin{array}{cc} p_{11} & p_{12}\\ p_{21} & p_{22} 
\end{array} \right), ~~ Q = \left(\begin{array}{cc} q_{11} & q_{12}\\ q_{21} & q_{22} 
\end{array} \right).
\end{aligned}
\end{align} 
The constraint that $ U_1 $ be 2-unitary gives the following conditions:
\begin{align}
\begin{aligned}
&P P^\dagger + Q Q^\dagger = \mathbb{I}_2 \; \text{(Unitarity)}\\& P^\dagger P+  Q^\dagger Q= \mathbb{I}_2\; \text{(T-dual)},\\
&\left.
\begin{aligned}
&\tr(P^\dagger P) = 1,\\&  \tr(Q^\dagger Q)=1 ,\\&  \tr(P^\dagger Q)=0
\end{aligned}
\right\}
\, \text{(Dual-unitary)}.\end{aligned} \label{eq:conditions_P_Q}
\end{align}
Consider the singular value decomposition $ P = V_1 D_1 W_1^\dagger $, where $ V_1 $ and $ W_1 $ are unitary matrices and $ D_1 = \text{Diag} \lbrace \sigma_1, \sigma_2 \rbrace$. The orthonormality condition $ \tr(P^\dagger P) =1 $ implies $ \sigma_1^2+\sigma_2^2=1 $. From the first two relations in Eq.~ (\ref{eq:conditions_P_Q}), we get
\begin{align}
\begin{aligned}
 Q Q^\dagger&=  V_1 D_1^{'2} V_1^\dagger,\\
  Q^\dagger Q &=W_1 D_1^{'2} W_1^\dagger, \\
\end{aligned}
\end{align}
where $ D_1' = \text{Diag} \lbrace \sigma_2, \sigma_1\rbrace $. Therefore, $ Q $ can be written as $ Q = V_1 D_2 W_1^\dagger $ where $ D_2$ is a diagonal matrix denoted by $ D_2 = \text{Diag} \lbrace \sigma'_1, \sigma'_2\rbrace $ and can in general be complex. Therefore, a local unitary transformation on $U_1$ given by
\begin{align}
\label{eq:step_2}
\begin{aligned}
U_2 & =\rl{ \mathbb{I}_3 \otimes \left(\begin{array}{cc}
	1&0\\
	0& V_1^\dagger 
	\end{array} \right) } U_1 \rl{ \mathbb{I}_3 \otimes \left(\begin{array}{cc}
	1&0\\
	0& W_1 
	\end{array} \right) } \\	
& =\left(\begin{array}{ccc|ccc|ccc}
1 & 0 & 0 & 0 & 0 & 0 & 0 & 0 & 0 \\
0 & 0 & 0 & 0 & \sigma_1 & 0& 0 & \sigma'_1& 0 \\
0 & 0 & 0 & 0 & 0 & \sigma_2& 0 & 0 & \sigma'_2 \\
\hline
0 & 0 & 0 & 0 & * & * & 0 & * & * \\
0 & * & * & * & 0 & 0 & * & 0 & 0 \\
0 & * & * & * & 0 & 0 & * & 0 & 0 \\
\hline
0 & 0 & 0 & 0 & * & * & 0 & * & * \\
0 & * & * & * & 0 & 0 & * & 0 & 0 \\
0 & * & * & * & 0 & 0 & * & 0 & 0 \\
\end{array}\right) \\&=\left(\begin{array}{ccc|ccc|ccc}
1 & 0 & 0 & 0 & 0 & 0 & 0 & 0 & 0 \\
0 & 0 & 0 & 0 & \sigma_1 & 0& 0 & \sigma'_1& 0 \\
0 & 0 & 0 & 0 & 0 & \sigma_2& 0 & 0 & \sigma'_2 \\
\hline
0 & 0 & 0 & 0 & * & * & 0 & * & * \\
0 & 0 & * & * & 0 & 0 & * & 0 & 0 \\
0 & * & 0 & * & 0 & 0 & * & 0 & 0 \\
\hline
0 & 0 & 0 & 0 & * & * & 0 & * & * \\
0 & 0 & * & * & 0 & 0 & * & 0 & 0 \\
0 & * & 0 & * & 0 & 0 & * & 0 & 0 \\
\end{array}\right).
\end{aligned}
\end{align}
The final form above has two more elements set to $0$ as the $3\times 3$ blocks have to be orthonormal.

It follows from the last two conditions in Eq.~(\ref{eq:conditions_P_Q}) that the matrix given by 
\begin{align}
\pmb{\sigma} = \begin{aligned}
\left(\begin{array}{cc}
\sigma_1 & \sigma_1'\\
\sigma_2&\sigma_2'
\end{array} \right),
\end{aligned}
\end{align}
is unitary. A local unitary transformation of the form
\begin{align}
\label{eq:step_3}
\begin{aligned}
U_3 =  U_2 \rl{ \begin{pmatrix}
	1&0\\0& \pmb{\sigma}^\dagger 
	\end{pmatrix} \otimes \mathbb{I}_3} 
\end{aligned}
\end{align}
followed by considering the unitarity of $ U_3^{\Gamma}$ results in 
\begin{align}
\begin{aligned}
U_3=\left(\begin{array}{ccc|ccc|ccc}
1 & 0 & 0 & 0 & 0 & 0 & 0 & 0 & 0 \\
0 & 0 & 0 & 0 & 1 & 0 & 0 & 0 & 0 \\
0 & 0 & 0 & 0 & 0 & 0& 0 &0& 1 \\
\hline
0 & 0 & 0 & 0 & 0 & c_{11} & 0 & c_{12} & 0 \\
0 & 0 & * & 0& 0 & 0 & * & 0 & 0 \\
0 & * & 0 & * & 0 & 0 & 0 & 0 & 0 \\
\hline
0 & 0 & 0 & 0 & 0 & c_{21} & 0 & c_{22} & 0\\
0 & 0 & * & 0 & 0 & 0 & * & 0 & 0 \\
0 & * & 0& * & 0 & 0 & 0 & 0 & 0 \\
\end{array}\right).
\end{aligned}
\end{align}
At this stage, we notice that {\em all} the columns of $U_3$ are unentangled. Therefore we need to apply local transformations on the left
to reduce further.
Note that four potentially non-zero entries are labeled $c_{ij}$ in $ U_3 $ form a $2 \times 2$ unitary matrix. Using this unitary matrix
\begin{align}
\label{eq:step_4}
\begin{aligned}
C = \begin{pmatrix}
c_{11}&c_{12}\\c_{21}&c_{22}
\end{pmatrix},
\end{aligned}
\end{align}
we apply the following local unitary transformation
\begin{align}
\begin{aligned}
U_4 = \rl{ \begin{pmatrix}
	1&0\\0& C^\dagger 
	\end{pmatrix} \otimes \mathbb{I}_3} U_3.
\end{aligned}
\end{align}
The constraint of $ U_4  $ being 2-unitary gives
\begin{align}
\begin{aligned}
U_4=\left(\begin{array}{ccc|ccc|ccc}
1 & 0 & 0 & 0 & 0 & 0 & 0 & 0 & 0 \\
0 & 0 & 0 & 0 & 1 & 0 & 0 & 0 & 0 \\
0 & 0 & 0 & 0 & 0 & 0& 0 &0& 1 \\
\hline
0 & 0 & 0 & 0 & 0 &1 & 0 & 0 & 0 \\
0 & 0 & 0 & 0& 0 & 0 & \alpha_2 & 0 & 0 \\
0 & \alpha_1 & 0 & 0 & 0 & 0 & 0 & 0 & 0 \\
\hline
0 & 0 & 0 & 0 & 0 & 0 & 0 & 1 & 0\\
0 & 0 & \alpha_3 & 0 & 0 & 0 & 0 & 0 & 0 \\
0 & 0 & 0& \alpha_4 & 0 & 0 & 0 & 0 & 0 \\
\end{array}\right),
\end{aligned}
\end{align}
where $ \alpha_1,\alpha_2,\alpha_3 $ and $ \alpha_4 $ are of modulus 1.\\
In the final step, we perform the following local unitary transformation
\begin{align} \label{eq:phase_lu}
\begin{aligned}
P_9  & = (\Phi_1 \otimes  \Phi_2 ) U_4( \Phi_3 \otimes  \Phi_4 ) \\
 & =\left(\begin{array}{ccc|ccc|ccc}
1 & 0 & 0 & 0 & 0 & 0 & 0 & 0 & 0 \\
0 & 0 & 0 & 0 & 1 & 0 & 0 & 0 & 0 \\
0 & 0 & 0 & 0 & 0 & 0& 0 &0& 1 \\
\hline
0 & 0 & 0 & 0 & 0 &1 & 0 & 0 & 0 \\
0 & 0 & 0 & 0& 0 & 0 & 1 & 0 & 0 \\
0 & 1 & 0 & 0 & 0 & 0 & 0 & 0 & 0 \\
\hline
0 & 0 & 0 & 0 & 0 & 0 & 0 & 1 & 0\\
0 & 0 & 1 & 0 & 0 & 0 & 0 & 0 & 0 \\
0 & 0 & 0& 1& 0 & 0 & 0 & 0 & 0 \\
\end{array}\right),
\end{aligned}
\end{align}
where
\begin{align}
\begin{aligned}
\Phi_1 & = \text{Diag} \left\{1,{\left(\alpha _1^* \alpha_2^* \right)^{\frac{1}{3}}}, \left(\alpha _3^* \alpha_4^* \right)^{\frac{1}{3}}  \right\} , \\
\Phi_2 & = \text{Diag} \left\{1,{\left(\alpha _2^* \alpha_3^* \right)^{\frac{1}{3}}}, \left(\alpha _1^* \alpha_4^* \right)^{\frac{1}{3}}  \right\}, \\
\Phi_3 & = \text{Diag} \left\{1,{\left(\alpha_1 \alpha _3 \alpha_4^* \right)^{\frac{1}{3}}}, \left(\alpha_1 \alpha _3 \alpha_2^* \right)^{\frac{1}{3}}  \right\} , \\
\Phi_4 & = \text{Diag} \left\{1,{\left(\alpha_2 \alpha _4 \alpha_1^* \right)^{\frac{1}{3}}}, \left(\alpha_2 \alpha _4 \alpha_3^* \right)^{\frac{1}{3}}  \right\} , \\
\end{aligned}
\end{align}
are diagonal unitaries.  Here, we choose the principal value of the cube root $ z^{1/3} $ with argument $ \arg(z) \in [0,2\pi) $. 
Concerning this last step, it has already been shown that any enphasing of $P_9$ is LU-equivalent to it \cite{Adam_SLOCC_2020}. 

{ Therefore, we have shown that, any 2-unitary two-qutrit operator is LU-equivalent to $P_9$. Hence, there exists only one LU-equivalence class of 2-unitary gates in $d=3$, or equivalently, there is only one AME$(4,3)$ state up to LU-equivalence. }

\end{document}